\documentclass[sigconf]{acmart}

\usepackage{algorithm}
\usepackage{algorithmic}
\usepackage{balance}
\usepackage{multirow}
\usepackage{amsmath}
\usepackage{mathtools}
\usepackage{amsthm}

\usepackage{booktabs}
\usepackage{xspace}
\newcommand{\ours}[0]{FedHUR\xspace}

\AtBeginDocument{%
  }

\copyrightyear{2026}
\acmYear{2026}
\setcopyright{cc}
\setcctype{by}
\acmConference[CIKM '26]{Proceedings of the 35th ACM International Conference on Information and Knowledge Management}{November 07--11, 2026}{Rome, Italy}
\acmBooktitle{Proceedings of the 35th ACM International Conference on Information and Knowledge Management (CIKM '26), November 07--11, 2026, Rome, Italy}
\acmDOI{10.1145/3799682.3841027}
\acmISBN{979-8-4007-2539-5/2026/11}

\begin{document}

\title{FedHUR: Learning Hierarchical Utility-Guided Client Relations for Personalized Federated Recommendation}

\author{Mingzhe Han}
\orcid{0000-0002-4911-6093}
\affiliation{
  \institution{Fudan University}
  \city{Shanghai}
  \country{China}
}
\email{mzhan22@m.fudan.edu.cn}

\author{Jiahao Liu}
\orcid{0000-0002-5654-5902}
\affiliation{%
  \institution{Fudan University}
  \city{Shanghai}
  \country{China}
}
\email{jiahaoliu21@m.fudan.edu.cn}

\author{Dongsheng Li}
\orcid{0000-0003-3103-8442}
\affiliation{
  \institution{Microsoft Research Asia}
  \city{Shanghai}
  \country{China}
}
\email{dongshengli@fudan.edu.cn}

\author{Jiankui Zhou}
\orcid{0009-0003-6986-1300}
\affiliation{%
  \institution{Fudan University}
  \city{Shanghai}
  \country{China}
}
\email{jkzhou24@m.fudan.edu.cn}

\author{Hansu Gu}
\orcid{0000-0002-1426-3210}
\affiliation{
  \institution{Independent}
  \city{Seattle}
  \country{United States}
}
\email{hansug@acm.org}

\author{Peng Zhang}
\orcid{0000-0002-9109-4625}
\authornote{Corresponding author.}
\affiliation{
  \institution{Fudan University}
  \city{Shanghai}
  \country{China}
}
\email{zhangpeng\_@fudan.edu.cn}

\author{Ning Gu}
\orcid{0000-0002-2915-974X}
\affiliation{
  \institution{Fudan University}
  \city{Shanghai}
  \country{China}
}
\email{ninggu@fudan.edu.cn}

\author{Tun Lu}
\orcid{0000-0002-6633-4826}
\authornotemark[1]
\affiliation{
  \institution{Fudan University}
  \city{Shanghai}
  \country{China}
}
\email{lutun@fudan.edu.cn}

\renewcommand{\shortauthors}{Mingzhe Han et al.}

\begin{abstract}
Federated recommendation enables collaborative model training while keeping user interaction data on local clients.
A central problem in federated recommendation is how to aggregate useful information across clients for personalized recommendation.
Existing personalized aggregation methods usually construct client relations from predefined parameter-based assumptions, such as parameter similarity or complementarity, and use these relations to determine aggregation weights.
However, such methods construct a single global relation, which is insufficient to capture the hierarchical and multi-granularity nature of user relations in recommendation.
Moreover, these predefined relations cannot directly reflect whether the related clients can improve prediction performance after aggregation. 
To address these limitations, we propose FedHUR, a federated recommendation framework for learning hierarchical utility-guided client relations.
FedHUR takes item-item filters as the object for relation construction and aggregation.
Specifically, it first aggregates and clusters each client's local information to obtain global hierarchical information.
Each client computes hierarchical utility signals based on its local information and the global hierarchical information, indicating which collaborative information is useful for improving its prediction.
The server uses these utility signals to retrieve clients that are useful to that client for further personalized aggregation.
Extensive experiments on five real-world datasets show that FedHUR consistently outperforms existing federated recommendation baselines, demonstrating the effectiveness of hierarchical utility-guided client relation learning.
Code is available at https://github.com/Mingzhe-Han/FedHUR.
\end{abstract}

\begin{CCSXML}
<ccs2012>
<concept>
<concept_id>10002951.10003317.10003347.10003350</concept_id>
<concept_desc>Information systems~Recommender systems</concept_desc>
<concept_significance>500</concept_significance>
</concept>
<concept>
<concept_id>10002978.10003029.10011150</concept_id>
<concept_desc>Security and privacy~Privacy protections</concept_desc>
<concept_significance>500</concept_significance>
</concept>
</ccs2012>
\end{CCSXML}

\ccsdesc[500]{Information systems~Recommender systems}
\ccsdesc[500]{Security and privacy~Privacy protections}

\keywords{recommendation, federated learning, hierarchical aggregation, utility-guided relation}

\maketitle

\section{Introduction}
Recommendation systems~\cite{liu2022parameter,liu2026distribution} learn user preferences from user interaction data, such as clicks, ratings, and purchases~\cite{jing2023contrastive,zhao2025dual,yang2026drsorec}. 
However, these interactions may reveal private user information, and regulations such as the General Data Protection Regulation (GDPR)~\cite{voigt2017eu} have imposed stricter requirements on the collection of personal data~\cite{gu2026inter,zhao2025social,liu2023recommendation,liu2025filtering,liu2026hidden}. 
Federated recommendation~\cite{mcmahan2017communication,kairouz2021advances} has therefore become a practical training paradigm. It keeps user interaction data on local clients and updates the server model using trained client models rather than collecting raw interaction data. 
In this way, the server recommendation model can benefit from distributed data while protecting private user information.

A central problem in federated recommendation is how to aggregate information across clients~\cite{li2023revisiting,shi2025fedawa,tang2024adapted}.
Traditional federated methods~\cite{mcmahan2017communication,li2020federated,karimireddy2020scaffold} usually aggregate client parameters by weighted averaging, where the aggregation weights are mainly determined by local data sizes.
These weights determine how much each client contributes to the aggregated model.
However, in recommendation systems, clients usually have different data distributions, so different clients may provide different levels of useful information for a given \textbf{target client}.
Therefore, traditional weighted averaging may fail to aggregate the most useful information for each individual client, leading to suboptimal recommendation performance.
To address this problem, existing methods construct personalized client relations~\cite{ye2023personalized,zhang2024gpfedrec} for each client and aggregate parameters based on the weights calculated by these relations.
These aggregation methods help the target client pay more attention to related clients, thereby improving the recommendation performance.

Despite their effectiveness, existing client relations are still limited in terms of granularity and utility.
\textbf{In terms of granularity, a single global relation may not fully model the complex client relations.}
Existing methods usually construct a single global relation to describe the correlation between clients.
However, user relations in recommendation systems are naturally hierarchical and multi-granularity~\cite{zang2023contrastive,qi2021hierec,wang2022target}.
For example, a candidate user may be helpful to the target user in sports items but conflict with the target user in movie items, which requires us to use multiple weights to model the relations between these two users~\cite{liu2025agentcf++,zhoudisentangling}.
Therefore, it is necessary to construct \textbf{hierarchical} client relations.
\textbf{In terms of utility, related clients may not be truly helpful for prediction.}
Existing methods usually measure relations between clients by calculating predefined parameter correlation (e.g., similarity or complementarity) between clients, and then use these relations to construct aggregation weights.
However, these methods only rely on assumptions that correlated clients can provide useful information.
Such relations are directly calculated from client parameters before aggregation, while ignoring the performance after aggregation.
We believe that the goal of aggregation is to aggregate useful information, which refers to the information that can improve the performance after aggregation.
Therefore, it is necessary to construct client relations \textbf{guided by post-aggregation utility}.

These limitations indicate that personalized federated recommendation requires \textbf{hierarchical utility-guided client relations}. 
However, obtaining such relations is non-trivial. 
First, \textbf{hierarchical relation spaces are difficult to define.} 
In federated recommendation, client relations can only be inferred from communicable item information.
However, item information is usually sparse and noisy, since many items are interacted with by only a few clients and local item parameters can be biased by limited user behaviors.
If we define relations over the whole item space, useful relations may be overlooked because they only appear in specific item groups.
If we define independent relations for isolated item groups or individual items, the learned relations may become unreliable, redundant, or even conflicting.
Second, \textbf{reliable aggregation utility is difficult to compute. }
Intuitively, the utility should be obtained by the recommendation performance after aggregation. 
This means we need to evaluate the prediction result for every candidate client, leading to heavy computation and communication costs. 

Inspired by graph signal processing~\cite{liu2023personalized,xia2024hierarchical,huang2017collaborative}, these challenges can be alleviated from the perspective of item graph filtering.
\textbf{For hierarchical relation modeling,} the graph filter view allows us to organize item relations in a coarse-to-fine manner.
Coarse relations capture stable common information over large item groups, while fine relations further describe relations within each group.
Like a tree structure, fine relations inherit information from their coarse parent groups, so client retrieval can consider hierarchical relations while avoiding noisy and redundant relation estimation.
\textbf{For utility computation,} graph signal processing provides an efficient way to analyze which items still need information from other clients.
Instead of repeatedly performing parameter aggregation and validation, it can directly analyze the client's filtering result on the item graph and derive a utility signal.
This signal indicates which part of the item space requires external collaborative information.
In this way, the aggregation utility can be estimated efficiently without additional aggregation and validation.

Based on this idea, we propose \ours, a federated recommendation framework for \textbf{learning hierarchical utility-guided client relations}.
Following graph-based federated recommendation methods~\cite{han2025fedcia}, \ours uses item-item filters as the aggregation target.
In each communication round, each client first trains a local recommendation model and uploads item embeddings to the server.
The server constructs filters from the uploaded embeddings and learns hierarchical utility-guided client relations through two key modules.
The first module is the \textbf{hierarchical relation mining module}.
It first clusters items into coarse item groups and computes coarse filters for each client.
Based on them, \ours further divides each coarse item group into finer item groups and computes fine filters.
In this way, \ours obtains filters at different granularities, which are used to construct hierarchical client relations in the following retrieval module.
The second module is the \textbf{utility-guided client retrieval module}.
Based on the filters at different granularities, each client computes its utility signals and uploads them to the server.
The server then retrieves the candidate clients and trains a scorer to estimate the usefulness of a candidate client for the target client.
The estimated scores are normalized as aggregation weights, which are exactly the hierarchical utility-guided client relations learned by \ours.
Finally, the server aggregates candidate filters according to these weights and sends the personalized filter back to each client for prediction.

Our main contributions are summarized as follows:

\begin{itemize}
    \item We propose to mine \textbf{hierarchical utility-guided client relations} for personalized federated recommendation. 
    The proposed relations are guided by aggregation utility and organized hierarchically, addressing the limitations of existing methods: selected clients may not be truly helpful, and a single global relation cannot capture hierarchical client relations.

    \item We propose \textbf{\ours}, which contains a \textbf{hierarchical relation mining module} and a \textbf{utility-guided client retrieval module}. The former constructs hierarchical relations by grouping items at different granularities, while the latter computes utility scores to estimate which candidate clients are useful for each target client.

    \item Extensive experiments on real-world datasets show that \ours consistently improves recommendation performance over existing federated recommendation baselines, demonstrating the effectiveness of hierarchical utility-guided client relations.
\end{itemize}

\section{Related Work}

\subsection{Federated Recommendation}

Federated learning enables multiple clients to collaboratively train a model without uploading raw local data~\cite{mcmahan2017communication,wang2024horizontal,sun2024survey}. 
It has been widely introduced into recommendation to protect private user interactions. 
In federated recommendation, user interaction data are stored on local clients, while the server coordinates model training through exchanged model information~\cite{zhang2023lightfr,zhang2026transfr}.
Early federated recommendation methods mainly adapt conventional recommendation models to the federated setting. 
For example, FedRec trains recommendation models locally and communicates model updates to protect user interactions~\cite{lin2020fedrec}. 
FedMF applies matrix factorization in federated recommendation and considers privacy protection for uploaded gradients~\cite{chai2020secure}. 
FedNCF further extends neural collaborative filtering to the federated setting~\cite{perifanis2022federated}.

Although these methods enable privacy-preserving recommendation, they usually rely on simple aggregation strategies, such as weighted averaging according to local data sizes. 
However, user behaviors in recommendation are heterogeneous, so different clients may provide different useful information for a target client. 
Therefore, simply aggregating all client updates into a single global model may be suboptimal for each individual client. 

\subsection{Personalized Aggregation in Federated Recommendation}

Personalized federated recommendation~\cite{hu2025p2fedrec} aims to improve the recommendation performance of each individual client by aggregating information in a client-specific manner~\cite{arivazhagan2019federated,t2020personalized,li2021ditto}. 
Some methods estimate client relations from item parameters, assuming that similar clients should share more information~\cite{zhang2024gpfedrec}. 
Some studies further design more specific aggregation strategies. 
FedCA argues that similarity alone is insufficient and introduces complementary clients for aggregation~\cite{zhang2026beyond}. 
FedRAP decomposes item embeddings into shared and personalized components to improve client personalization~\cite{li2024federated}. 
These methods demonstrate that aggregation based on relations is important for federated recommendation.

Despite their effectiveness, most of them construct client relations from predefined priors, such as similarity, complementarity, or collaborative correlation, and usually describe client relations with a single global relation.
Unlike these methods, \ours learns hierarchical utility-guided client relations to retrieve useful candidate information for each target client at different levels.

\section{Preliminaries}

\subsection{Recommendation Algorithms}

Let $\mathcal{U}$ be the set of users and $\mathcal{I}$ be the set of items.
For each user $u \in \mathcal{U}$, the recommendation task is to estimate the preference score of user $u$ on an item $i \in \mathcal{I}$ from observed interaction data.
We denote the interaction dataset by $\mathcal{D}=\{(u,i,y_{ui})\},$ where $y_{ui}\in\{0,1\}$ indicates whether the interaction between user $u$ and item $i$ is observed.

A general recommendation model can be formulated as $\hat{y}_{ui}=R(u,i;\theta),$ where $\hat{y}_{ui}$ denotes the predicted score and $\theta$ denotes all trainable parameters.
In embedding-based models, users and items are first mapped into latent representations.
The prediction function can be written as
\[
\hat{y}_{ui}
=
P\big(\mathbf{e}_u, \mathbf{e}_i; \theta_p\big),
\quad
\mathbf{e}_u=E_u(u;\theta_u),\quad
\mathbf{e}_i=E_i(i;\theta_i),
\]
where $E_u(\cdot)$ and $E_i(\cdot)$ are the user and item embedding functions, and $P(\cdot)$ is a prediction function such as inner product or an MLP.

\subsection{Federated Recommendation Algorithms}

In federated recommendation, interaction data are kept on distributed clients rather than collected by the server.
Assume there are $K$ clients, and client $k$ owns a private local dataset $\mathcal{D}_k$.
Each client updates its model using its local interactions.

Different from general federated learning, recommendation models usually contain both user parameters and item parameters.
User parameters are closely related to private user preferences, so they are kept on local clients and are not uploaded to the server.
Therefore, federated recommendation methods usually upload and aggregate item or public model information.

The local model of client $k$ can be divided into local user parameters $\theta_{u,k}$ and uploaded parameters $\phi_k$, where $\phi_k$ may include item embeddings, item parameters, or public prediction parameters.
At communication round $t$, each client trains its local model on $\mathcal{D}_k$ and uploads $\phi_k^{t+1}$ to the server.
A traditional aggregation strategy is weighted averaging:
\[
\phi^{t+1}
=
\sum_{k=1}^{K}
\omega_k \phi_k^{t+1},
\quad
\omega_k=
\frac{|\mathcal{D}_k|}
{\sum_{j=1}^{K}|\mathcal{D}_j|}.
\]
The aggregated parameters $\phi^{t+1}$ are then distributed to clients for the next communication round, while the user parameters remain local.

\subsection{Relation-Based Personalized Aggregation}

Recent personalized federated recommendation methods assign aggregation weights according to client relations.
For a target client $k$ and a candidate client $j$, a relation score is usually computed from their uploaded information or client representations as $r_{k,j} = \mathrm{sim}(\mathbf{z}_k,\mathbf{z}_j),$ where $\mathbf{z}_k$ denotes the information used to describe client $k$, and $\mathrm{sim}(\cdot,\cdot)$ is a relation function.

The relation scores are then normalized into aggregation weights and used to aggregate the uploaded information:
\[
\begin{aligned}
\alpha_{k,j}
=
\frac{\exp(r_{k,j})}
{\sum_{m=1}^{K}\exp(r_{k,m})},
\tilde{\phi}_k
=
\sum_{j=1}^{K}\alpha_{k,j}\phi_j.
\end{aligned}
\]
where $\phi_j$ denotes the uploaded item or public information from client $j$, such as item embeddings, item parameters, or other collaborative representations.
The personalized aggregated information $\tilde{\phi}_k$ is then used by the target client together with its local user parameters.
\begin{figure*}[t]
\centering
\includegraphics[width=2.0\columnwidth]{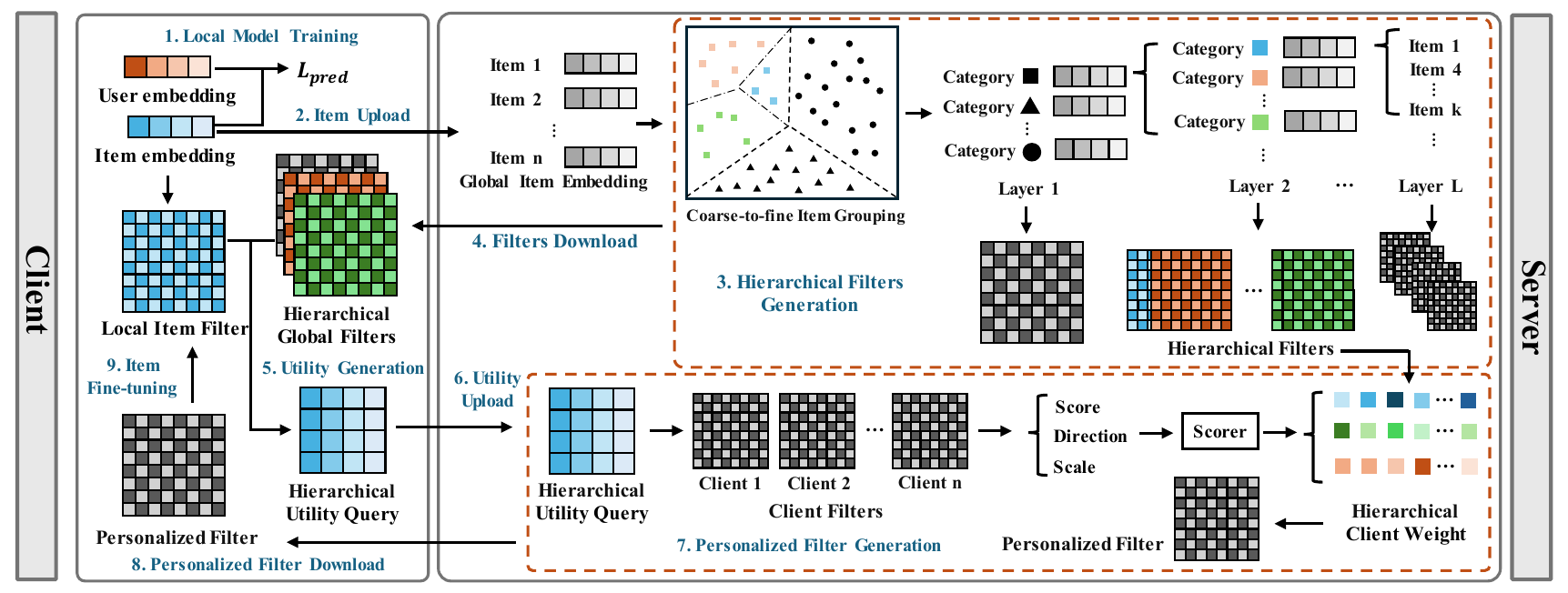}
\caption{
The illustration of the \ours framework. We only illustrate a single client for simplicity.
}
\label{fig:fra}
\end{figure*}

\section{Method}
\subsection{Overview}

In this section, we introduce \ours, a federated recommendation framework for learning hierarchical utility-guided client relations.
Figure~\ref{fig:fra} illustrates one communication round of \ours.

In each communication round, clients first train local item embeddings from their private interactions and upload the item embeddings to the server.
Given the uploaded item embeddings, the server computes the global item embeddings and global item-item filter.
The server then partitions the items according to the global embeddings and computes hierarchical item-item filters based on the hierarchical relation mining module.
These filters are distributed to clients.
Each client computes hierarchical utility signals locally and uploads these signals to the server.
Given the uploaded utility signals, the server trains a scorer to generate the aggregation weights for each client based on the utility-guided client retrieval module.
Finally, the server aggregates useful candidate filters across hierarchical client relations to obtain personalized filters, which are sent back to clients for local fine-tuning.

\subsection{Local Training and Upload}

At the beginning of each round, each client $k$ trains the local recommendation model on its private data $\mathcal{D}_k$:
\begin{equation}
\min_{\theta_{u,k},\phi_k}
\mathcal{L}^{(k)}_{rec}
=
\frac{1}{|\mathcal{D}_k|}
\sum_{(u,i,y_{ui})\in\mathcal{D}_k}
\ell_{rec}
\left(
R(u,i;\theta_{u,k},\phi_k),
y_{ui}
\right),
\label{eq:local_training}
\end{equation}
where $\theta_{u,k}$ denotes the user parameters kept by client $k$, and $\phi_k$ denotes the item parameters that can be communicated.
In this work, $\phi_k$ is instantiated as the local item embeddings $\mathbf{E}_k$.
Therefore, the user parameters $\theta_{u,k}$ remain on local clients and are never uploaded.
Client $k$ only uploads the item embedding matrix $\mathbf{E}_k$ to the server.

\subsection{Global Information Aggregation}

Before learning hierarchical utility-guided client relations, \ours first constructs global information.
After receiving item embeddings from all clients, the server computes the global item embedding and the global item-item filter:
\begin{equation}
\mathbf{E}_g
=
\frac{1}{K}
\sum_{k=1}^{K}
\mathbf{E}_k,
\qquad
\mathbf{S}_g
=
\frac{1}{K}
\sum_{k=1}^{K}
\mathbf{E}_k\mathbf{E}_k^\top
\in
\mathbb{R}^{|\mathcal{I}|\times|\mathcal{I}|}.
\label{eq:global_information}
\end{equation}
Here, $\mathbf{E}_g$ is used to construct hierarchical item groups, while $\mathbf{S}_g$ serves as the global base filter for later personalized aggregation.

\subsection{Hierarchical Relation Mining Module}

Existing federated recommendation methods usually construct a single global relation between clients.
However, as discussed in the introduction, user relations in recommendation systems are hierarchical and multi-granularity.
Two clients may be related at a coarse level, while their relations may become different at a finer level.
Therefore, \ours introduces a hierarchical relation mining module to construct hierarchical relations from coarse to fine.

\ours takes filters as the aggregation objects. Accordingly, the hierarchical relation mining module should construct filters at different granularities for subsequent client relation calculation.
At a coarse level, \ours groups items into large item groups and computes coarse-level filters.
Then, within each coarse group, \ours further divides items into finer groups and computes finer filters.
The finer filter inherits the corresponding coarse filter value, so that relation modeling can preserve coarse-level common information while adding finer-level details.
By repeating this process, \ours obtains multi-level filters that support later utility-guided relation computation.

\subsubsection{Coarse-to-Fine Item Grouping}

Starting from the global item embedding $\mathbf{E}_g$, the server first partitions the whole item set $\mathcal{I}$ into coarse item groups.
In this work, we use k-means clustering as the default partition method for its simplicity and efficiency, while other clustering methods can also be used.
Then, each coarse group is further divided into finer item groups by applying k-means within the group.
This process is repeated for $L$ levels, where $L$ is a hyperparameter.

We denote the item groups at level $\ell$ as:
\begin{equation}
\mathcal{G}^{(\ell)}
=
\left\{
\mathcal{G}^{(\ell)}_1,
\mathcal{G}^{(\ell)}_2,
\dots,
\mathcal{G}^{(\ell)}_{M_\ell}
\right\},
\qquad
\ell=1,\dots,L,
\end{equation}
where $M_\ell = |\mathcal{G}^{(\ell)}|$ denotes the number of item groups at level $\ell$, and a smaller $\ell$ indicates a coarser level.
The first level $\mathcal{G}^{(1)}$ contains coarse item groups obtained by applying k-means to all item embeddings in $\mathbf{E}_g$.
For $\ell>1$, the groups at level $\ell$ are obtained by further applying k-means within each group from level $\ell-1$.
Thus, the item space is decomposed into coarse-to-fine item groups.

\subsubsection{Coarse-Level Relation Filters}

For each coarse group $\mathcal{G}^{(1)}_a$, client $k$ computes its group embedding by averaging item embeddings in the group:
\begin{equation}
\mathbf{z}^{(1)}_{k,a}
=
\frac{1}{|\mathcal{G}^{(1)}_a|}
\sum_{i\in \mathcal{G}^{(1)}_a}
\mathbf{e}_{k,i}.
\end{equation}
The coarse-level filter of client $k$ is then computed by the inner product between group embeddings:
\begin{equation}
\mathbf{F}^{(1)}_k[a,b]
=
\left(\mathbf{z}^{(1)}_{k,a}\right)^\top
\mathbf{z}^{(1)}_{k,b},
\qquad
a,b=1,\dots,M_1.
\end{equation}
The global coarse-level filter is obtained by averaging the local filters:
\begin{equation}
\mathbf{F}^{(1)}_g
=
\frac{1}{K}
\sum_{k=1}^{K}
\mathbf{F}^{(1)}_k.
\end{equation}

\subsubsection{Finer-Level Relation Filters}

After constructing the coarse-level filters, \ours further builds finer relation filters within each coarse group.
At level $\ell>1$, each group $\mathcal{G}^{(\ell-1)}_a$ from the previous level is further divided into several smaller groups.
For each smaller group $\mathcal{G}^{(\ell)}_b \subset \mathcal{G}^{(\ell-1)}_a$, client $k$ computes its group embedding by averaging the item embeddings inside the group:
\begin{equation}
\mathbf{z}^{(\ell)}_{k,b}
=
\frac{1}{|\mathcal{G}^{(\ell)}_b|}
\sum_{i\in \mathcal{G}^{(\ell)}_b}
\mathbf{e}_{k,i}.
\end{equation}

The finer-level relation filter is then computed between smaller groups.
To preserve the hierarchical relation, each finer-level relation inherits the corresponding parent-level relation value:
\begin{equation}
\mathbf{F}^{(\ell)}_{k,a}[b,c]
=
\beta_{\ell-1}
\mathbf{F}^{(\ell-1)}_{g}[a,a]
+
\left(
\mathbf{z}^{(\ell)}_{k,b}
\right)^\top
\mathbf{z}^{(\ell)}_{k,c},
\qquad
\mathcal{G}^{(\ell)}_b,\mathcal{G}^{(\ell)}_c
\subset
\mathcal{G}^{(\ell-1)}_a .
\label{eq:finer_filter}
\end{equation}
where $\beta_{\ell-1}$ controls the contribution of the inherited parent-level relation.

The global finer-level filter is obtained by averaging over clients:
\begin{equation}
\mathbf{F}^{(\ell)}_{g,a}
=
\frac{1}{K}
\sum_{k=1}^{K}
\mathbf{F}^{(\ell)}_{k,a}.
\end{equation}

By repeating this construction from coarse to fine, \ours obtains relation filters at multiple granularities.
These filters preserve coarse-level relations while adding finer-level information, and are used for later utility-guided client relation computation.

\subsection{Utility-Guided Client Retrieval Module}

After obtaining relation filters at different levels, \ours needs to identify which candidate clients are useful for the target client at each level.
Existing federated recommendation methods usually compute client relations by similarity or complementarity between model parameters.
However, these relations do not directly validate whether the candidate client's information can improve the target client's prediction after aggregation.
A straightforward solution is to aggregate each candidate client's information to the target client and validate the prediction improvement.
However, this requires repeated aggregation and validation, which is computationally expensive and difficult to scale.

To address this problem, \ours formulates utility computation as a retrieval-based estimation task.
Rather than repeatedly computing the utility improvement after aggregation, \ours estimates the information required by the target client and uses such information as a query to retrieve helpful clients.
Specifically, for each target client, we construct a query based on its local and global filters.
This query represents the filter direction most required by the target client. 
The server uses this query to score all candidate clients and find the clients that are most helpful to the target client.
Therefore, the expensive process of utilities validation is transformed into a query-based retrieval process, where \ours directly searches for clients that can provide the information required by the target client.

For clarity, we describe the retrieval process at an arbitrary level $\ell$ and omit the item-group index.
The same procedure is applied to all relation filters constructed by the hierarchical relation mining module.

\subsubsection{Utility Query Construction}

For target client $k$, we use the global filter $\mathbf{F}^{(\ell)}_g$ as the base filter at level $\ell$.
Client $k$ first computes a local filter $\mathbf{H}^{(\ell)}_k$ from its local item information at the current level.
This local filter summarizes the item information observed by client $k$ and reflects the local item structure.

Instead of directly evaluating the performance of each candidate after aggregation, \ours estimates the utility query as:
\begin{equation}
\mathbf{U}^{(\ell)}_k
=
\mathbf{H}^{(\ell)}_k
-
\frac{1}{2}
\left(
\mathbf{H}^{(\ell)}_k
\mathbf{F}^{(\ell)}_g
+
\mathbf{F}^{(\ell)}_g
\mathbf{H}^{(\ell)}_k
\right).
\label{eq:utility_query}
\end{equation}
Intuitively, $\mathbf{U}^{(\ell)}_k$ describes the filter direction that the target client requires to improve its local prediction.
A candidate client is useful if its residual filter can provide information aligned with this query.

\begin{theorem}[Validity of Utility Query]
Assume that $\mathbf{H}^{(\ell)}_k$ and $\mathbf{F}^{(\ell)}_g$ are symmetric filters.
The utility query
\begin{equation}
\mathbf{U}^{(\ell)}_k
=
\mathbf{H}^{(\ell)}_k
-
\frac{1}{2}
\left(
\mathbf{H}^{(\ell)}_k
\mathbf{F}^{(\ell)}_g
+
\mathbf{F}^{(\ell)}_g
\mathbf{H}^{(\ell)}_k
\right)
\end{equation}
provides the utility direction required by target client $k$.
\end{theorem}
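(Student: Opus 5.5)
We read ``utility direction'' as the steepest-descent direction of a reconstruction objective. The target client wants the base filter $\mathbf{F}^{(\ell)}_g$, after a personalized correction $\Delta$, to reproduce its own local item structure $\mathbf{H}^{(\ell)}_k$. Since filters are applied to signals on the item graph, the natural objective measures how well the filtered version of the local filter matches the local filter itself. Concretely, we would introduce
\begin{equation*}
\mathcal{J}_k(\mathbf{F})
=
\frac{1}{2}\operatorname{tr}\!\left(\mathbf{F}\mathbf{H}^{(\ell)}_k\mathbf{F}\right)
-\operatorname{tr}\!\left(\mathbf{H}^{(\ell)}_k\mathbf{F}\right),
\end{equation*}
defined over symmetric $\mathbf{F}$. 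Here $\operatorname{tr}(\mathbf{H}\mathbf{F})$ rewards the filter for aligning with the local structure, and the quadratic term penalizes over-smoothing. When $\mathbf{H}^{(\ell)}_k$ is positive semidefinite (it is a Gram-type matrix, as in the construction of $\mathbf{F}^{(\ell)}_k$), this objective equals $\frac12\|\mathbf{H}^{1/2}(\mathbf{I}-\mathbf{F})\|_F^2$ up to a constant. That is the residual energy of the local signals after filtering, which is a filtering-loss surrogate for prediction error.

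The key step is a gradient computation restricted to the space of symmetric matrices. The differential is $d\mathcal{J}_k=\frac12\operatorname{tr}(d\mathbf{F}\,\mathbf{H}\mathbf{F}+\mathbf{F}\mathbf{H}\,d\mathbf{F})-\operatorname{tr}(\mathbf{H}\,d\mathbf{F})$. Using symmetry of $\mathbf{H}$ and of $\mathbf{F}$, the Euclidean gradient is $\frac12(\mathbf{H}\mathbf{F}+\mathbf{F}\mathbf{H})-\mathbf{H}$, which is already symmetric. So the projection onto symmetric matrices leaves it unchanged. Evaluating at the current base filter $\mathbf{F}=\mathbf{F}^{(\ell)}_g$ gives $-\nabla\mathcal{J}_k(\mathbf{F}^{(\ell)}_g)=\mathbf{U}^{(\ell)}_k$ exactly. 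This also explains why the symmetrized product appears: the plain product $\mathbf{H}\mathbf{F}$ is not symmetric, and symmetrization is precisely the projection onto admissible filters.

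We would then interpret the result in terms of aggregation utility. Suppose the personalized filter is $\mathbf{F}^{(\ell)}_g+\eta\Delta_j$, where $\Delta_j$ is candidate $j$'s residual filter. A first-order Taylor expansion gives
\begin{equation*}
\mathcal{J}_k(\mathbf{F}_g+\eta\Delta_j)-\mathcal{J}_k(\mathbf{F}_g)=-\eta\langle \mathbf{U}^{(\ell)}_k,\Delta_j\rangle_F+O(\eta^2).
\end{equation*}
Hence the post-aggregation improvement for client $k$ is, to first order, the inner product of the candidate's residual with $\mathbf{U}^{(\ell)}_k$. This is what justifies using $\mathbf{U}^{(\ell)}_k$ as a retrieval query. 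In addition, by Cauchy--Schwarz, $\mathbf{U}^{(\ell)}_k/\|\mathbf{U}^{(\ell)}_k\|_F$ is the unit direction of maximal first-order improvement. We would also note that $\mathbf{U}^{(\ell)}_k=0$ exactly when $\mathbf{F}_g$ is stationary for $\mathcal{J}_k$. For example, if $\mathbf{F}_g=\mathbf{I}$ on the range of $\mathbf{H}$, the client needs no external information, which supports the interpretation.

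The main obstacle is that ``provides the utility direction'' is informal. The argument is only as convincing as the choice of $\mathcal{J}_k$, which is a modeling assumption and not something derived from the recommendation loss. The plan is therefore to state explicitly at the start that client $k$'s local utility is measured by the filtering residual above. We would justify this by its equality with $\frac12\|\mathbf{H}^{1/2}(\mathbf{I}-\mathbf{F})\|_F^2$ under positive semidefiniteness, and then present the theorem as the precise gradient statement together with its first-order improvement corollary.
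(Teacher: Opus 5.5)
Your proposal is correct and follows essentially the paper's route. The paper minimizes $\frac{1}{2}\|\mathbf{Z}^{(\ell)\top}_k(\mathbf{I}-\mathbf{F})\|_F^2$ with $\mathbf{H}^{(\ell)}_k=\mathbf{Z}^{(\ell)}_k\mathbf{Z}^{(\ell)\top}_k$, which is exactly your $\frac{1}{2}\|\mathbf{H}^{1/2}(\mathbf{I}-\mathbf{F})\|_F^2$ (a residual of the local signals, not of the filter $\mathbf{H}^{(\ell)}_k$ itself as your wording suggests), computes the negative gradient $\mathbf{H}^{(\ell)}_k-\mathbf{H}^{(\ell)}_k\mathbf{F}^{(\ell)}_g$, and takes its symmetric part, which is equivalent to restricting to symmetric filters; your Taylor and Cauchy--Schwarz remarks are a useful addition that the paper only asserts informally.
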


\begin{proof}
To justify the direction of $\mathbf{U}^{(\ell)}_k$, we introduce a local embedding reconstruction objective.
Let $\mathbf{Z}^{(\ell)}_k$ denote the item/group representation matrix of client $k$ at level $\ell$, and let
$
\mathbf{H}^{(\ell)}_k
=
\mathbf{Z}^{(\ell)}_k
\mathbf{Z}^{(\ell)\top}_k .
$

From the perspective of graph signal filtering, the item-item filter
$F$ can be used to propagate and reconstruct the local item signal
of client $k$. Specifically, given the local item/group representation
$Z_k^{(\ell)}$, the filtered prediction is
$\widehat{Z}_k^{(\ell)\top}=Z_k^{(\ell)\top}F .$
Therefore, the residual
\[
Z_k^{(\ell)\top}-Z_k^{(\ell)\top}F
\]
measures the part of the local collaborative signal that cannot be
well reconstructed by the current filter. To identify the filter update
direction that reduces this local prediction error, we introduce the
following reconstruction objective:
\[
J_k^{(\ell)}(F)
=
\frac{1}{2}
\left\|
Z_k^{(\ell)\top}
-
Z_k^{(\ell)\top}F
\right\|_F^2 .
\]

The objective can be rewritten as:
\begin{align}
\mathcal{J}^{(\ell)}_k(\mathbf{F})
&=
\frac{1}{2}
\left\|
\mathbf{Z}^{(\ell)\top}_k
(\mathbf{I}-\mathbf{F})
\right\|_F^2 \notag \\
&=
\frac{1}{2}
\operatorname{Tr}
\left(
(\mathbf{I}-\mathbf{F})^\top
\mathbf{Z}^{(\ell)}_k
\mathbf{Z}^{(\ell)\top}_k
(\mathbf{I}-\mathbf{F})
\right) \notag \\
&=
\frac{1}{2}
\operatorname{Tr}
\left(
(\mathbf{I}-\mathbf{F})^\top
\mathbf{H}^{(\ell)}_k
(\mathbf{I}-\mathbf{F})
\right).
\end{align}

Expanding the trace form gives:
\begin{align}
\mathcal{J}^{(\ell)}_k(\mathbf{F})
=
\frac{1}{2}
\operatorname{Tr}
\left(
\mathbf{H}^{(\ell)}_k
\right)
-
\operatorname{Tr}
\left(
\mathbf{H}^{(\ell)}_k
\mathbf{F}
\right)
+
\frac{1}{2}
\operatorname{Tr}
\left(
\mathbf{F}^{\top}
\mathbf{H}^{(\ell)}_k
\mathbf{F}
\right).
\end{align}

Taking the derivative with respect to $\mathbf{F}$, we obtain:
\begin{equation}
\nabla_{\mathbf{F}}
\mathcal{J}^{(\ell)}_k(\mathbf{F})
=
\mathbf{H}^{(\ell)}_k
\mathbf{F}
-
\mathbf{H}^{(\ell)}_k .
\end{equation}
Therefore, at the global filter $\mathbf{F}^{(\ell)}_g$, the negative gradient is:
\begin{equation}
-
\nabla_{\mathbf{F}}
\mathcal{J}^{(\ell)}_k(\mathbf{F}^{(\ell)}_g)
=
\mathbf{H}^{(\ell)}_k
-
\mathbf{H}^{(\ell)}_k
\mathbf{F}^{(\ell)}_g .
\end{equation}

Since the item-item filter is symmetric, we take the symmetric part of this negative gradient:
\begin{align}
\operatorname{Sym}
\left(
\mathbf{H}^{(\ell)}_k
-
\mathbf{H}^{(\ell)}_k
\mathbf{F}^{(\ell)}_g
\right)
&=
\frac{1}{2}
\left[
\mathbf{H}^{(\ell)}_k
-
\mathbf{H}^{(\ell)}_k
\mathbf{F}^{(\ell)}_g
+
\left(
\mathbf{H}^{(\ell)}_k
-
\mathbf{H}^{(\ell)}_k
\mathbf{F}^{(\ell)}_g
\right)^\top
\right] \notag \\
&=
\mathbf{H}^{(\ell)}_k
-
\frac{1}{2}
\left(
\mathbf{H}^{(\ell)}_k
\mathbf{F}^{(\ell)}_g
+
\mathbf{F}^{(\ell)}_g
\mathbf{H}^{(\ell)}_k
\right).
\end{align}

This is exactly the utility query $\mathbf{U}^{(\ell)}_k$.
Thus, $\mathbf{U}^{(\ell)}_k$ provides the utility direction used to evaluate whether candidate filters can help target client $k$.
\end{proof}

Directly uploading $\mathbf{U}^{(\ell)}_k$ can be expensive for large filters.
Thus, client $k$ uploads a projected query:
\begin{equation}
\mathbf{Q}^{(\ell)}_k
=
\operatorname{Norm}_{q}
\left(
\mathbf{U}^{(\ell)}_k
\mathbf{P}^{(\ell)}
\right),
\label{eq:projected_query}
\end{equation}
where $\mathbf{P}^{(\ell)}$ is a shared random projection matrix, and $\operatorname{Norm}_{q}(\cdot)$ normalizes each projected query direction to unit $\ell_2$ norm.
The projected query keeps the main matching direction of $\mathbf{U}^{(\ell)}_k$ while reducing the upload cost.

\subsubsection{Utility-Guided Retrieval}

After constructing the utility query, the server uses it to retrieve useful candidate clients for the target client.
At level $\ell$, let $\mathbf{Z}^{(\ell)}_j$ denote the item or group embedding matrix of candidate client $j$.
Since client $k$ uploads the projected query $\mathbf{Q}^{(\ell)}_k$, the server computes the retrieval score by projecting the candidate representation onto the utility directions of the target client:
\begin{equation}
s^{(\ell)}_{k,j}
=
\left\|
\mathbf{Q}^{(\ell)\top}_{k}
\mathbf{Z}^{(\ell)}_{j}
\right\|_F^2 .
\label{eq:projected_retrieval_score}
\end{equation}
Here, $\mathbf{Q}^{(\ell)\top}_{k}\mathbf{Z}^{(\ell)}_{j}$ measures how much the candidate representation aligns with the utility directions required by target client $k$.
A larger score indicates that candidate client $j$ can provide more information along the directions currently needed by the target client.

\subsubsection{Learned Retrieval Scorer}

The retrieval score in Eq.~\eqref{eq:projected_retrieval_score} provides a direct estimate of information utility.
However, this score only gives a total matching value.
It does not distinguish whether this matching is concentrated on important directions or spread over noisy directions, and it may also be affected by the scale of the item set.
Therefore, after obtaining the retrieval score, \ours uses a lightweight scorer to refine the score with more detailed matching information.

Specifically, for target client $k$ and candidate client $j$ at level $\ell$, we decompose the direct retrieval score into direction-level matching strengths.
Let $\mathbf{q}^{(\ell)}_{k,t}$ denote the $t$-th column of $\mathbf{Q}^{(\ell)}_k$.
The scorer input is defined as:
\begin{equation}
\begin{aligned}
\mathbf{x}^{(\ell)}_{k,j}
&=
\operatorname{Norm}_{j}
\left(
\left[
\boldsymbol{\rho}^{(\ell)}_{k,j},
\nu^{(\ell)}_j,
\sqrt{s^{(\ell)}_{k,j}}
\right]
\right), \\
\rho^{(\ell)}_{k,j,t}
&=
\left\|
\mathbf{q}^{(\ell)\top}_{k,t}
\mathbf{Z}^{(\ell)}_{j}
\right\|_2,
\qquad
\nu^{(\ell)}_j
=
\frac{\|\mathbf{Z}^{(\ell)}_j\|_F}{\sqrt{m_\ell d}} .
\end{aligned}
\end{equation}
Here, $\boldsymbol{\rho}^{(\ell)}_{k,j}$ describes how the total matching score is distributed over different utility directions, and $\nu^{(\ell)}_j$ describes the normalized scale of the candidate representation, where $m_\ell$ and $d$ denote its number of rows and embedding dimension, respectively.

The scorer is implemented as a lightweight MLP \(f_{\psi}(\cdot)\).
For each target client, we normalize the direct scores over candidate clients and optimize the scorer with mean squared error:
\begin{equation}
\mathcal{L}_{score}
=
\sum_{\ell}
\sum_{k=1}^{K}
\sum_{j=1}^{K}
\left(
f_{\psi}
\left(
\mathbf{x}^{(\ell)}_{k,j}
\right)
-
\operatorname{Norm}_{j}
\left(
s^{(\ell)}_{k,j}
\right)
\right)^2 .
\label{eq:score_loss}
\end{equation}

The final retrieval score is obtained by combining the direct score and the refined score:
\begin{equation}
r^{(\ell)}_{k,j}
=
s^{(\ell)}_{k,j}
+
\lambda
\operatorname{Std}_{q}
\left(
s^{(\ell)}_{k,q}
\right)
\operatorname{Norm}_{j}
\left(
f_{\psi}
\left(
\mathbf{x}^{(\ell)}_{k,j}
\right)
\right),
\end{equation}
where $\lambda$ controls the contribution of score refinement.
The same scorer $f_{\psi}(\cdot)$ is shared across all levels and item groups, since it learns a general rule for refining retrieval scores rather than a level-specific relation.
Finally, the refined scores are normalized across candidate clients, and the aggregation weights are computed by applying softmax over $r^{(\ell)}_{k,j}$.

\subsection{Personalized Filter Aggregation}

After obtaining the refined retrieval scores, the server aggregates candidate filters to construct a personalized item-item filter for each target client.
For each level $\ell$, the retrieval score $r^{(\ell)}_{k,j}$ is normalized over candidate clients by softmax:
\begin{equation}
\alpha^{(\ell)}_{k,j}
=
\frac{
\exp
\left(
r^{(\ell)}_{k,j}/\tau
\right)
}{
\sum_{q=1}^{K}
\exp
\left(
r^{(\ell)}_{k,q}/\tau
\right)
},
\end{equation}
where $\tau$ is the temperature parameter.
Then, the level-$\ell$ personalized residual filter is computed as:
\begin{equation}
\mathbf{R}^{(\ell)}_{k}
=
\sum_{j=1}^{K}
\alpha^{(\ell)}_{k,j}
(\mathbf{F}^{(\ell)}_j - \mathbf{F}^{(\ell)}_g).
\end{equation}
Here, $\mathbf{R}^{(\ell)}_{k}$ is defined at the granularity of level $\ell$.
Before combining residual filters from different levels, the server maps each $\mathbf{R}^{(\ell)}_{k}$ back to the full item-item filter space.

The final personalized item-item filter of client $k$ is:
\begin{equation}
\tilde{\mathbf{S}}_k
=
\mathbf{S}_g
+
\sum_{\ell=1}^{L}
\beta_\ell
\operatorname{Map}_{\ell}
\left(
\mathbf{R}^{(\ell)}_{k}
\right),
\label{eq:personalized_filter}
\end{equation}
where $\operatorname{Map}_{\ell}(\cdot)$ maps the level-$\ell$ filter to the full item-item filter space by assigning the item relation by the value of its corresponding group relation, and $\beta_\ell$ controls the contribution of the $\ell$-th level.
In this way, the personalized filter preserves the global collaborative information from $\mathbf{S}_g$ and incorporates utility-guided residual information from different levels.
In implementation, we optionally use parameter-based aggregation at the coarse level on large-scale datasets to reduce computation cost.
This acceleration is only applied to coarse-level relations and does not change the utility-guided residual aggregation at finer levels.

\subsection{Local Fine-Tuning with Personalized Filter}

After obtaining the personalized filter $\tilde{\mathbf{S}}_k$, the server sends it to client $k$.
Following collaborative-information-based federated recommendation, client $k$ locally fine-tunes its item embeddings by matching their item-item filter with $\tilde{\mathbf{S}}_k$:
\begin{equation}
\min_{\mathbf{E}_k}
\left\|
\mathbf{E}_k\mathbf{E}_k^\top
-
\tilde{\mathbf{S}}_k
\right\|_F^2 .
\label{eq:local_finetune}
\end{equation}
Only item-side embeddings are updated in this step, while user-side parameters remain local and are not uploaded.

\subsection{Algorithm}

Algorithm~\ref{alg:ours} summarizes one communication round of \ours.

\begin{algorithm}[t]
\caption{One Communication Round of \ours}
\label{alg:ours}
\begin{algorithmic}[1]
\REQUIRE Client datasets $\{\mathcal{D}_k\}_{k=1}^{K}$, number of levels $L$, projection matrices $\{\mathbf{P}^{(\ell)}\}_{\ell=1}^{L}$
\ENSURE Personalized filters $\{\tilde{\mathbf{S}}_k\}_{k=1}^{K}$

\FOR{each client $k=1,\dots,K$ in parallel}
    \STATE Train the local recommendation model by Eq.~\eqref{eq:local_training}.
    \STATE Upload item embeddings $\mathbf{E}_k$ to the server.
\ENDFOR

\STATE The server computes $\mathbf{E}_g$ and $\mathbf{S}_g$ by Eq.~\eqref{eq:global_information}.
\STATE The server constructs coarse-to-fine item groups from $\mathbf{E}_g$ and computes multi-level filters $\{\mathbf{F}^{(\ell)}_{k,a}, \mathbf{F}^{(\ell)}_{g,a}\}$.

\FOR{each client $k=1,\dots,K$ in parallel}
    \FOR{each constructed filter $(\ell,a)$}
        \STATE Compute local filter $\mathbf{H}^{(\ell)}_{k,a}$ and utility query $\mathbf{U}^{(\ell)}_{k,a}$ by Eq.~\eqref{eq:utility_query}.
        \STATE Upload projected query $\mathbf{Q}^{(\ell)}_{k,a}$ by Eq.~\eqref{eq:projected_query}.
    \ENDFOR
\ENDFOR

\FOR{each target client $k=1,\dots,K$}
    \FOR{each constructed filter $(\ell,a)$}
        \STATE Compute retrieval scores by Eq.~\eqref{eq:projected_retrieval_score}.
        \STATE Refine retrieval scores with the scorer from the previous round and compute aggregation weights.
        \STATE Aggregate candidate residual filters to obtain $\mathbf{R}^{(\ell)}_{k,a}$.
    \ENDFOR
\ENDFOR
\STATE Update the shared scorer $f_{\psi}(\cdot)$ using Eq.~\eqref{eq:score_loss}.

\FOR{each client $k=1,\dots,K$ in parallel}
    \STATE The server constructs $\tilde{\mathbf{S}}_k$ by Eq.~\eqref{eq:personalized_filter} and sends it to client $k$.
    \STATE Client $k$ fine-tunes item embeddings by Eq.~\eqref{eq:local_finetune}.
\ENDFOR

\end{algorithmic}
\end{algorithm}

\subsection{Privacy Protection}

\ours follows the standard privacy setting in federated recommendation.
Similar to FedMF and other federated recommendation methods, clients only upload item information for collaborative aggregation, and existing privacy strategies such as secure aggregation or differential privacy can be applied to these uploaded parameters when needed.

In addition, \ours requires clients to upload utility signals for client retrieval, which are computed from local filters.
Similar to FedCIA, they are derived from item filter information, so \ours does not introduce additional privacy assumptions beyond existing federated recommendation methods.
In summary, \ours focuses on hierarchical utility-guided relation learning and adopts the same privacy protection strategy as existing federated recommendation methods.

\subsection{Communication Cost Discussion}

\ours follows the communication paradigm of FedCIA and communicates item information for collaborative aggregation.
Compared with FedCIA, \ours introduces extra communication for hierarchical relation learning, including hierarchical filters and projected utility queries.
However, these additional messages are lightweight.
The hierarchical filters are constructed at different granularities and only contain group-level item-item information, rather than multiple full item-item filters.
The utility queries are further compressed by random projection before uploading.
Therefore, the extra communication cost is smaller than communicating one additional full item-item filter, and the total communication cost of \ours is less than twice that of FedCIA in the uncompressed setting.

Moreover, \ours can directly adopt the compression strategy used in FedCIA, such as SVD-based low-rank compression.
Thus, \ours introduces limited additional communication cost while enabling hierarchical utility-guided relation learning.

\begin{table}[t]
\centering
\caption{The statistics of our datasets.
}
\resizebox{0.8\columnwidth}{!}
{
\begin{tabular}{l|ccc}
\toprule
Dataset & \# Users & \# Items & \# Interactions \\
\midrule
ML-100K  & 943      & 1682     & 100000          \\
ML-1M    & 6040     & 3706     & 1000209         \\
Book    &11000    &9332      &200860             \\
BX       &12794      &13775    &146101            \\
Beauty      &5083    &11479      &90952            \\
\bottomrule
\end{tabular}
}
\label{tab:dataset}
\end{table}

\begin{table*}[t]
\centering
\caption{Overall performance on five datasets. The bold indicates the best and the underline indicates the second-best. }
\label{tab:overall}
\resizebox{0.9\textwidth}{!}{
\begin{tabular}{llccccccccc}
\toprule
Dataset & Metric & MF & FedMF & FedNCF & PFedRec & GPFedRec & FedRAP & FedCIA & FedCA & \textbf{\ours} \\
\midrule
\multirow{3}{*}{ML-100K}
& Recall@10 & 0.0756 & 0.1813 & 0.1680 & 0.1681 & 0.1162 & 0.0600 & \underline{0.1903} & 0.1184 & \textbf{0.1973} \\
& MRR@10  & 0.3232 & 0.5501 & 0.5568 & 0.5685 & 0.4550 & 0.0600 & \underline{0.5855} & 0.4107 & \textbf{0.6146} \\
& NDCG@10 & 0.1504 & 0.3198 & 0.3170 & 0.3233 & 0.2395 & 0.1260 & \underline{0.3440} & 0.2008 & \textbf{0.3693} \\
\midrule
\multirow{3}{*}{ML-1M}
& Recall@10 & 0.0641 & 0.1285 & 0.1298 & 0.1177 & 0.0798 & 0.0592 & \underline{0.1392} & 0.0806 & \textbf{0.1407} \\
& MRR@10  & 0.3601 & 0.5664 & 0.5590 & 0.5367 & 0.4269 & 0.3082 & \underline{0.5947} & 0.4167 & \textbf{0.6038} \\
& NDCG@10 & 0.1816 & 0.3254 & 0.3265 & 0.3192 & 0.2396 & 0.1572 & \underline{0.3611} & 0.1989 & \textbf{0.3640} \\
\midrule
\multirow{3}{*}{Book}
& Recall@10 & 0.0412 & 0.1064 & \underline{0.1067} & 0.0300 & 0.0464 & 0.0281 & 0.1012 & 0.0381 & \textbf{0.1075} \\
& MRR@10  & 0.0643 & 0.1265 & \underline{0.1333} & 0.0414 & 0.0667 & 0.0469 & 0.1245 & 0.0569 & \textbf{0.1350} \\
& NDCG@10 & 0.0371 & \underline{0.0846} & \textbf{0.0875} & 0.0239 & 0.0396 & 0.0259 & 0.0816 & 0.0335 & \textbf{0.0875} \\
\midrule
\multirow{3}{*}{BX}
& Recall@10 & 0.0068 & 0.0243 & 0.0211 & 0.0094 & 0.0231 & 0.0041 & \underline{0.0263} & 0.0060 & \textbf{0.0284} \\
& MRR@10  & 0.0070 & 0.0182 & 0.0205 & 0.0089 & 0.0205 & 0.0041 & \underline{0.0257} & 0.0058 & \textbf{0.0288} \\
& NDCG@10 & 0.0047 & 0.0150 & 0.0153 & 0.0062 & 0.0155 & 0.0029 & \underline{0.0186} & 0.0039 & \textbf{0.0208} \\
\midrule
\multirow{3}{*}{Beauty}
& Recall@10 & 0.0484 & 0.0514 & 0.0616 & 0.0401 & 0.0414 & 0.0343 & \underline{0.0660} & 0.0466 & \textbf{0.0711} \\
& MRR@10  & 0.0783 & 0.0632 & 0.0791 & 0.0515 & 0.0642 & 0.0587 & \underline{0.0815} & 0.0686 & \textbf{0.0886} \\
& NDCG@10 & 0.0447 & 0.0418 & 0.0506 & 0.0334 & 0.0375 & 0.0334 & \underline{0.0541} & 0.0407 & \textbf{0.0581} \\
\bottomrule
\end{tabular}
}
\end{table*}

\section{Experiments}

We aim to answer the following research questions:

\textbf{RQ1:} Does \ours outperform existing federated recommendation methods?

\textbf{RQ2:} Are our hierarchical utility-guided relations better than existing relations?

\textbf{RQ3:} Do the utility query and learned scorer effectively estimate useful candidate clients?

\textbf{RQ4:} Does \ours introduce acceptable communication and computation cost?

\subsection{Experimental Settings}

\subsubsection{Datasets}

We evaluate \ours on five real-world recommendation datasets, including ML-100K~\cite{harper2015movielens}, ML-1M~\cite{harper2015movielens}, BX~\cite{ziegler2005improving}, Book~\cite{mcauley2015image}, and Beauty~\cite{mcauley2015image}. The statistics of these datasets are shown in Table~\ref{tab:dataset}.
Following the same setting as previous federated recommendation studies, we split each dataset into training, validation, and test sets, and divide users into 100 clients. 

Our method's computations are \textbf{based solely on the training set}, eliminating unfair practices such as exploiting validation errors.

\subsubsection{Baselines}
We compare \ours with representative federated recommendation methods, including FedMF~\cite{chai2020secure}, FedNCF~\cite{perifanis2022federated}, PFedRec~\cite{zhang2023dual}, GPFedRec~\cite{zhang2024gpfedrec}, FedRAP~\cite{li2024federated}, FedCIA~\cite{han2025fedcia}, and FedCA~\cite{zhang2026beyond}.

\subsubsection{Evaluation Metrics}

We adopt Recall~\cite{herlocker2004evaluating}, Mean Reciprocal Rank (MRR)~\cite{voorhees1999trec}, and Normalized Discounted Cumulative Gain (NDCG)~\cite{jarvelin2002cumulated} as evaluation metrics.
These metrics are widely used in federated recommendation to evaluate performance.
Following common practice, we report Recall@10, MRR@10, and NDCG@10.

\subsubsection{Implementation Details}

We implement \ours using PyTorch on an NVIDIA Tesla T4 GPU. We use MF as the backbone model for \ours. 
For a fair comparison, all baselines and \ours follow the same hyperparameter search space.
Specifically, the learning rate is searched in $\{0.1, 0.01, 0.001\}$, and the weight decay is searched in $\{0,10^{-6},10^{-3}\}$, and the batch size is searched in $\{256, 1024, 4096\}$. We use Adam as the optimizer; the details are provided in the reproduction code.

It is worth noting that some personalized federated learning methods evaluate recommendation performance with a sampled candidate set, which usually contains one positive item and 99 negative items. 
This evaluation is different from the full-ranking setting and may lead to different performance scales. 
In our experiments, we follow the standard recommendation evaluation~\cite{he2020lightgcn} and rank each positive item against all items. 
Therefore, the results reported in this paper are not directly comparable with those obtained under sampled evaluation settings.

\subsection{Overall Performance (RQ1)}

Table~\ref{tab:overall} reports the overall recommendation performance on five datasets.
From the results, we have the following observations.
1) Federated recommendation methods generally outperform the basic recommendation model without aggregation.
This shows that aggregating information from multiple clients is useful in our experimental setting.
2) Personalized aggregation methods based on predefined relations do not always achieve better performance.
This indicates that client relations constructed from predefined similarity or complementarity do not necessarily correspond to useful information for prediction.
Therefore, a client selected by a predefined relation may not provide the information that the target client actually needs.
3) \ours consistently achieves the best or comparable performance across all datasets.
Compared with baselines, \ours further improves the results on all datasets.
This demonstrates that our hierarchical utility-guided relations can further help each target client retrieve more useful candidate information.

\begin{table}[t]
\centering
\caption{Relation analysis under the same aggregation framework.}
\label{tab:ablation}
\resizebox{\columnwidth}{!}{
\begin{tabular}{lccc}
\toprule
Method & Recall@10 & MRR@10 & NDCG@10 \\
\midrule
No Relation (FedCIA) & 0.1903 & 0.5855 & 0.3440 \\
\midrule
Single Similarity & 0.1904 & 0.5948 & 0.3442 \\
Single Complementarity & 0.1933 & 0.6045 & 0.3567 \\
Single S + C & 0.1930 & 0.5902 & 0.3473 \\
Single Utility & 0.1954 & 0.6047 & 0.3549 \\
\midrule
Hierarchical Similarity & \underline{0.1962} & 0.6064 & 0.3582 \\
Hierarchical Complementarity & 0.1950 & \underline{0.6108} & \underline{0.3657} \\
Hierarchical S + C & 0.1948 & 0.6074 & 0.3619 \\
Hierarchical Utility (ours) & \textbf{0.1973} & \textbf{0.6146} & \textbf{0.3693} \\
\bottomrule
\end{tabular}
}
\end{table}

\begin{table}[t]
\centering
\caption{Ablation study of utility query construction.}
\label{tab:ablation_utility}
\resizebox{0.9\columnwidth}{!}{
\begin{tabular}{lccc}
\toprule
Method & Recall@10 & MRR@10 & NDCG@10 \\
\midrule
$U=\mathbf{H}$ 
& 0.1932 & 0.6074 & 0.3507 \\
$U=\mathbf{H}-\mathbf{H}\bar{\mathbf{F}}$ 
& 0.1956 & 0.6081 & 0.3626 \\
$U=\mathbf{H}-\bar{\mathbf{F}}\mathbf{H}$ 
& 0.1919 & 0.5846 & 0.3471 \\
$U=\mathbf{H}-\frac{1}{2}(\mathbf{H}\bar{\mathbf{F}}+\bar{\mathbf{F}}\mathbf{H})$ 
& \textbf{0.1973} & \textbf{0.6146} & \textbf{0.3693} \\
\bottomrule
\end{tabular}
}
\end{table}

\subsection{Relation Analysis (RQ2)}

Although Table~\ref{tab:overall} shows that \ours is better than existing methods, these methods may use different backbone models, communicated information, and aggregation objects.
Therefore, the comparison in Table~\ref{tab:overall} cannot fully indicate whether our relation is better.
Therefore, in this section, we fix the backbone, communicated information, and aggregation object to make a fair comparison.
Specifically, we take FedCIA as the baseline and extract the following client relations based on existing methods.
1) No Relation. Some traditional methods~\cite{chai2020secure,zhang2023dual,han2025fedcia} simply average all uploaded information without designing specific relations based on the target client.
2) Similarity Relation. Some methods~\cite{zhang2024gpfedrec} define the relations between clients by the similarity of the parameters of the uploaded models.
3) Complementarity Relation. In addition to similarity, some methods~\cite{zhang2026beyond} suggest the complementarity among models should also be considered.
4) Similarity and Complementarity (S + C) Relation. The combination of the above relations.
For each relation type, we compare both single and hierarchical relations to evaluate whether hierarchical relation modeling is helpful.
Table~\ref{tab:ablation} reports the relation analysis on ML-100K.

From the results, we have the following observations.
1) Compared with existing similarity-based, complementarity-based, similarity and complementarity-based relations, our utility-guided  relation achieves better performance.
This shows that those predefined relations do not always indicate whether a candidate client can improve the target client's performance after aggregation.
In contrast, our utility-guided relation directly estimates whether the candidate information matches the target client's required collaborative direction, making it more suitable for personalized federated recommendation.
2) For all relation types, the hierarchical relations generally outperform the single relation.
This demonstrates that a single global relation is insufficient to describe complex user relations in recommendation.
Therefore, the Hierarchical Utility relation achieves the best overall performance, verifying the effectiveness of the proposed hierarchical utility-guided client relation.

\subsection{Ablation Study (RQ3)}

\subsubsection{Effect of utility query construction.}

We further analyze different ways of constructing the utility query.
As shown in Table~\ref{tab:ablation_utility}, directly using the local filter $\mathbf{H}$ can already provide a useful signal, but its performance is clearly worse than the full symmetric form.
This indicates that only considering local signal is insufficient, because the utility query should also consider what information has already been captured by the current base filter.
The two residual queries, i.e., $\mathbf{H}-\mathbf{H}\bar{\mathbf{F}}$ and $\mathbf{H}-\bar{\mathbf{F}}\mathbf{H}$, achieve unstable performance across different metrics.
In contrast, the symmetric form achieves the best results on all metrics.
This verifies the effectiveness of our utility query design and supports the use of the utility direction in \ours.

\subsubsection{Effect of scorer input features.}
\begin{table}[t]
\centering
\caption{Ablation study of scorer input features.}
\label{tab:ablation_scorer}
\resizebox{0.9\columnwidth}{!}{
\begin{tabular}{lccc}
\toprule
Method & Recall@10 & MRR@10 & NDCG@10 \\
\midrule
score\_only $(\sqrt{s})$ 
& 0.1960 & 0.6062 & 0.3571 \\
rho\_only $(\boldsymbol{\rho})$ 
& 0.1962 & 0.5935 & 0.3522 \\
no\_rho $(\nu+\sqrt{s})$ 
& 0.1934 & 0.6065 & 0.3523 \\
no\_nu $(\boldsymbol{\rho}+\sqrt{s})$ 
& 0.1954 & 0.6062 & 0.3599 \\
no\_score $(\boldsymbol{\rho}+\nu)$ 
& 0.1967 & 0.6095 & 0.3621 \\
full $(\boldsymbol{\rho}+\nu+\sqrt{s})$ 
& \textbf{0.1973} & \textbf{0.6146} & \textbf{0.3693} \\
\bottomrule
\end{tabular}
}
\end{table}

\begin{table}[t]
\centering
\small
\setlength{\tabcolsep}{3.5pt}
\caption{Correlation between pre-aggregation relation scores and post-aggregation performance on ML-100K. P and S denote Pearson and Spearman correlations, respectively.}
\label{tab:correlation}
\resizebox{0.9\columnwidth}{!}{
\begin{tabular}{llrrrr}
\toprule
Split & Metric & Sim-P & Util-P & Sim-S & Util-S \\
\midrule
Train & Recall@10 & -0.1125 & \textbf{0.1073} & -0.1835 & \textbf{0.1632} \\
Train & MRR@10    & \textbf{0.1087} & -0.1106 & \textbf{0.0871} & -0.1002 \\
Train & NDCG@10   & -0.0517 & \textbf{0.0332} & -0.1333 & \textbf{0.0950} \\
\midrule
Valid & Recall@10 & -0.2606 & \textbf{0.2634} & -0.1738 & \textbf{0.1772} \\
Valid & MRR@10    & -0.3300 & \textbf{0.3327} & -0.2886 & \textbf{0.3030} \\
Valid & NDCG@10   & -0.2797 & \textbf{0.2802} & -0.2761 & \textbf{0.2775} \\
\midrule
Test  & Recall@10 & -0.2520 & \textbf{0.2553} & -0.3219 & \textbf{0.3127} \\
Test  & MRR@10    & -0.2235 & \textbf{0.2232} & -0.2220 & \textbf{0.2156} \\
Test  & NDCG@10   & -0.3248 & \textbf{0.3224} & -0.3469 & \textbf{0.3376} \\
\bottomrule
\end{tabular}
}
\end{table}

Table~\ref{tab:ablation_scorer} reports the ablation results of different scorer inputs.
The score\_only performs much worse than the full model, showing that the learned scorer is not simply a score mapping. Only using the total matching score cannot capture the matching score across different projected directions.
Compared with score\_only, using $\boldsymbol{\rho}$ or $\nu$ generally achieves better performance, indicating that both direction matching information and embedding scale statistics are useful for score refinement.
The no\_rho and no\_nu further show that removing either component weakens performance and no\_score still performs competitively, showing that the decomposed matching features also contain strong utility information.
Finally, the full scorer achieves the best results on all metrics, demonstrating that combining direction matching strength, embedding norm statistics, and total matching score provides the most stable retrieval performance.

\subsubsection{Correlation analysis.}
To examine whether relation scores reflect actual aggregation
benefits, we compute Pearson and Spearman correlations between
pre-aggregation relation scores and post-aggregation performance
on ML-100K. As shown in
Table~\ref{tab:correlation}, utility scores are consistently positively
correlated with validation and test performance, while similarity
scores are unstable and mostly negative. This suggests that
similarity captures general relatedness, whereas the proposed
utility score is better aligned with the target client's current
aggregation need.

\subsubsection{Visualization of Hierarchical Relations}

\begin{figure}[t]
\centering
\includegraphics[width=0.95\columnwidth]{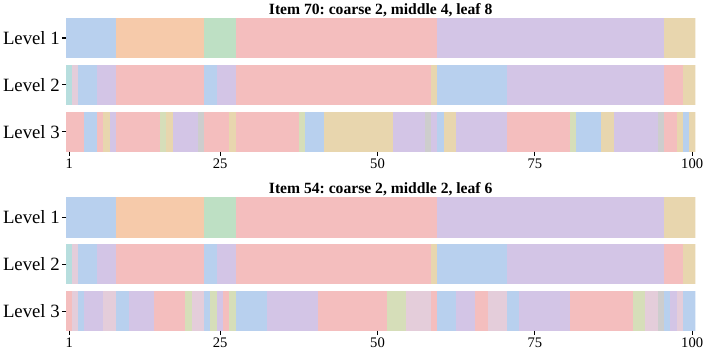}
\caption{
Visualization of hierarchical relations for two randomly selected items.
Each row represents one level, and different colors indicate the item relations used at this level.
}
\label{fig:multilayer}
\end{figure}

To further illustrate whether \ours truly constructs hierarchical relations, we visualize the relations used by two randomly selected items in Figure~\ref{fig:multilayer}.
Each row corresponds to one relation level, and the highlighted regions indicate the relations at that level.

From the figure, we can observe that the same target item relies on different relations across different levels.
At the coarse level, the item is related to a broad item group, which captures general collaborative patterns shared by many items.
At finer levels, the relations become more localized, where the item only uses relations within fine groups.
Therefore, the relations used by the same item are gradually refined from coarse to fine.
This visualization verifies that \ours assigns hierarchical relations at different granularities.

\begin{figure}[t]
\centering
\includegraphics[width=\columnwidth]{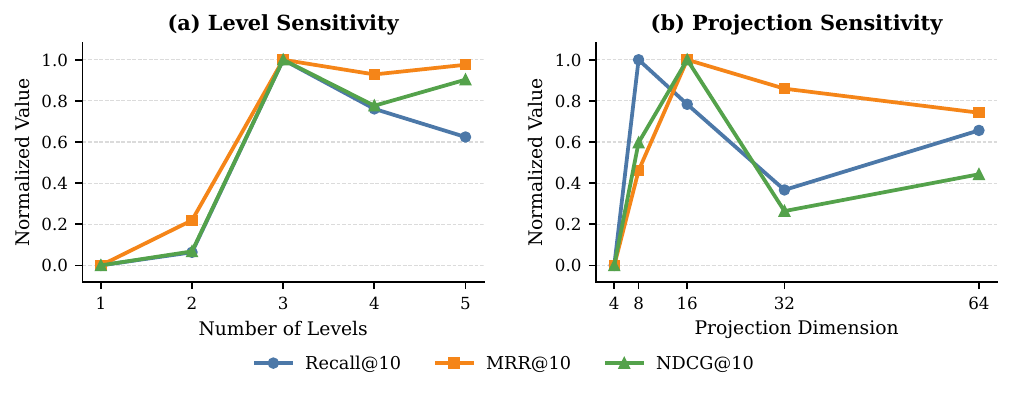}
\caption{Sensitivity analysis of number of levels and projection dimension. 
For clarity, each metric is independently normalized to $[0,1]$.}
\label{fig:sensitivity}
\end{figure}

\subsubsection{Sensitivity Analysis.}
We analyze the sensitivity of \ours to the number of levels and the projection dimension in Figure~\ref{fig:sensitivity}. \ours performs best with three levels, showing that moderate coarse-to-fine relation modeling is beneficial, while overly deep hierarchies may introduce noisy or redundant relations.
For the projection dimension, performance first improves and then becomes stable or slightly decreases, indicating that a moderate projection size is sufficient to preserve utility directions while reducing communication cost.

\subsection{Efficiency Analysis (RQ4)}

\begin{table}[t]
\centering
\caption{Communication and runtime comparison with FedCIA.}
\label{tab:efficiency}
\resizebox{\columnwidth}{!}{
\begin{tabular}{lccccc}
\toprule
Method 
& Upload 
& Download 
& Runtime 
& Finetune Time 
& Total Time \\
\midrule
FedCIA 
& 0.82 MB 
& 0.82 MB 
& 20.10 s 
& 11.68 s 
& 610.04 s \\
\ours 
& 0.92 MB 
& 1.16 MB 
& 29.06 s 
& 11.66 s 
& 700.05 s \\
\bottomrule
\end{tabular}
}
\end{table}

Table~\ref{tab:efficiency} reports the communication and runtime comparison between FedCIA and \ours.
Since \ours is built upon the collaborative information aggregation paradigm of FedCIA, it introduces extra communication for hierarchical filters and projected utility queries.
However, the additional communication cost is limited.
The total communication cost of \ours is about 27\% higher than FedCIA, which is far below twice the communication cost.
In terms of runtime, \ours requires additional computation for hierarchical relation mining, utility-guided retrieval, and score refinement.
Thus, the total training time of \ours is 700.05 s, only about 14.8\% higher than FedCIA.
These results show that \ours introduces limited additional communication and computation cost while achieving better recommendation performance.

\section{Conclusion}

In this paper, we propose \ours, a federated recommendation framework that learns \textbf{hierarchical utility-guided client relations} for personalized aggregation.
\ours contains two key modules.
The hierarchical relation mining module constructs multi-level relations, enabling relation modeling from coarse to fine.
The utility-guided client retrieval module derives utility signals from each target client's local information and retrieves candidate clients that match the target client's utility direction.
Extensive experiments on five real-world datasets demonstrate that \ours consistently improves recommendation performance over representative federated recommendation baselines.

\begin{acks}

This work is supported by Major Project of the National Social Science Fund of China (NSSFC) under the Grant No. 25\&ZD260, and National Natural Science Foundation of China (NSFC) under the Grant No. 62372113. Peng Zhang \& Tun Lu are with the College of Computer Science and Artificial Intelligence, Fudan University. Tun Lu is also affiliated to the MOE Laboratory for National Development and Intelligent Governance, and Shanghai Key Laboratory of Data Science, Fudan University.

\end{acks}

\section*{GenAI Usage Disclosure}

The authors used generative AI tools only for language polishing and grammar checking during the preparation of this manuscript. 
All technical ideas, method design, experimental results, analyses, and conclusions were produced and verified by the authors. 
The authors carefully reviewed and revised all AI-assisted text and take full responsibility for the final content of the paper.

\bibliographystyle{ACM-Reference-Format}
\balance
\bibliography{ref}

\end{document}